\documentclass[superscriptaddress,aps,pra,onecolumn,showpacs,nofootinbib,longbibliography,notitlepage]{revtex4-1}
\usepackage{etex}
\usepackage{amsmath,amssymb,amsthm}
\usepackage[colorlinks=true,citecolor=blue,urlcolor=blue]{hyperref}
\usepackage[pdftex]{graphicx}
\usepackage{times,txfonts}
\usepackage{braket}
\usepackage{color}
\usepackage{natbib}
\usepackage{amsmath,blkarray}
\usepackage{mathtools}
\usepackage{latexsym}
\usepackage{tabularx, booktabs}
\usepackage{graphics,epstopdf}
\usepackage{graphicx}
\usepackage{float}
\usepackage{graphicx}
\usepackage{amsfonts}
\usepackage{subcaption}
\usepackage{color,soul}

\newcommand{\be}{\begin{equation}}
\newcommand{\ee}{\end{equation}}
\newcommand{\ba}{\begin{eqnarray}}
\newcommand{\ea}{\end{eqnarray}}

\newtheorem{theorem}{Theorem}

\begin{document}
\title{Role of maximally entangled states in the context of linear steering inequalities}

\author{Debarshi Das}
\email{dasdebarshi90@gmail.com}
\affiliation{Centre for Astroparticle Physics and Space Science (CAPSS), Bose Institute, Block EN, Sector V, Salt Lake, Kolkata 700 091, India}

\author{Souradeep Sasmal}
\email{souradeep.007@gmail.com}
\affiliation{Centre for Astroparticle Physics and Space Science (CAPSS), Bose Institute, Block EN, Sector V, Salt Lake, Kolkata 700 091, India}

\author{Arup Roy}
\email{arup145.roy@gmail.com}
\affiliation{S. N. Bose National Centre for Basic Sciences, Salt Lake, Kolkata 700 106, India}

\begin{abstract}
Linear steering inequalities are useful to check whether a bipartite state is steerable when both the parties are allowed to perform $n$ dichotomic measurements on their parts. In the present study we propose the necessary and sufficient condition under which $2$-settings linear steering inequality will be violated for any given set of spin-$\frac{1}{2}$ observables at trusted and untrusted parties' sides. The important result revealed by the present paper is that maximally entangled two-qubit states give the largest quantum violations of $2$-settings as well as $3$-settings linear steering inequalities attainable for any given set of spin-$\frac{1}{2}$ observables at trusted and untrusted parties' sides (if any violation exists for that given set of spin-$\frac{1}{2}$ observables).
\keywords{EPR steering \and maximally entangled state}
\end{abstract}

\maketitle

\section{ Introduction}

In 1935 Einstein, Podolsky and Rosen (EPR) presented an argument showing the incompleteness of quantum mechanics \cite{EPR1935}. However, Schrodinger did not believe in that incompleteness. Rather EPR argument surprised him by the fact that an observer can control/steer a system which is not in her possession. This motivated Schrodinger to conceive the celebrated concept of `steering' \cite{S1935}. The concept of steering in the form of a task has been introduced recently \cite{WJD2007,JWD2007}. The task of quantum steering is to prepare different ensembles at one part of a bipartite system by performing local quantum measurements on another part in such a way that these ensembles cannot be explained by a local hidden state (LHS) model and no-signalling condition (the probability of obtaining one party's outcome does not depend on spatially separated other party's setting) is always satisfied by the bipartite system. This implies that the steerable correlations cannot be reproduced by a local hidden variable-local hidden state (LHV-LHS) model. In recent years, investigations related to quantum steering have been acquiring considerable significance, as evidenced by a wide range of studies \cite{st8,st10,steer3,st4,st9,sw,st5,sr,sf,se,new,sc,sg}. 

It is well-known that EPR steering lies in between entanglement and Bell nonlocality: Bell-nonlocal states form a strict subset of EPR steerable states which also form a strict subset of entangled states \cite{WJD2007,st11}. However, unlike Bell-nonlocality \cite{bell2} and entanglement \cite{ent}, the task of quantum steering is inherently asymmetric with respect to the observers \cite{st7}. In this case, the outcome statistics of one subsystem (which is being `steered') is due to valid quantum measurements on a valid quantum state. On the other hand, there is no such constraint for the other subsystem. There exist entangled states which are one-way steerable, i.e., demonstrate steerability from one observer to the other, but not vice-versa \cite{st7,ows1}. The study of quantum steering also finds applications in semi device independent scenario where the party, which is being ‘steered’, has trust on his/her quantum device but the other party's device is untrusted. One big advantage in this direction is that such scenarios are experimentally less demanding than fully device-independent protocols (where both of the parties distrust their devices) and, at the same time, require less assumptions than standard quantum cryptographic scenarios. Secure quantum key distribution (QKD) using quantum steering has been demonstrated \cite{st12}, where one party cannot trust his/her devices. 

	In \cite{CJWR2009} the authors have developed a series of ``linear steering inequalities" which are useful to check whether a bipartite state is steerable when both the parties are allowed to perform $n$ dichotomic measurements on his or her part. Apart from that several steering inequalities have been proposed \cite{si1,si2,si3,si4,si6,si7,si5,CFFW2015,GC2016,tighter,entrop1,entrop2} whose violations can render a correlation to be steerable. 
	
	In case of Bell-nonlocality, for an arbitrary given two qubit state, the maximum magnitude of the left hand side of Bell-CHSH (Bell-Clauser-Horne-Shimony-Holt) inequality \cite{bell,chsh} contingent upon using projective measurements of spin-$\frac{1}{2}$ observables has been studied \cite{hc}. On the other hand, in the context of EPR steering, the maximum magnitude of the left hand side of $2$-settings linear steering inequality \cite{CJWR2009} as well as that of EPR-steering analog of the CHSH inequality \cite{CFFW2015} for any given two qubit state under projective measurements of spin-$\frac{1}{2}$ observables have also been investigated \cite{costa,mal}. Furthermore, it has been shown that a given two qubit state violates $2$-settings linear steering inequality \textit{if and only if} the given state violates Bell-CHSH inequality and this is also true for EPR-steering analog of the CHSH inequality \cite{GC2016,costa}. In all these studies, maximum magnitudes of the left hand sides of $2$-settings linear steering inequality and Bell-CHSH inequality for a given two qubit state have been analysed by performing the maximization over all possible measurement settings. Motivated by the above results we investigate the maximum magnitude (maximized over all possible bipartite quantum states) of the left hand side of $2$-settings linear steering inequality attainable  for any given set spin-$\frac{1}{2}$ observables in the present study. Using this we propose the necessary and sufficient condition under which $2$-settings linear steering inequality will be violated for any given set of spin-$\frac{1}{2}$ observables at trusted and untrusted parties' sides. The maximum magnitude of the left hand side of Bell-CHSH inequality attainable for any given set of  spin-$\frac{1}{2}$ observables was studied by Kar \textit{et al.} \cite{guru}. By comparing these results we show that a given set of spin-$\frac{1}{2}$ observables violates $2$-settings linear steering inequality \textit{if and only if} that given set of spin-$\frac{1}{2}$ observables violates Bell-CHSH inequality.
	
	It was argued that, for any given set of spin-$\frac{1}{2}$ observables at the two spatially separated parties' sides, the maximum attainable quantum violation of Bell-CHSH inequality (if there exists any quantum violation for that given set of observables) is achieved if the shared state is a pure maximally entangled state \cite{guru,c}. Since, EPR steering has a vast application in semi device independent scenario as already discussed, it is important to study which entangled state is the most effective resource for witnessing EPR steering contingent upon using a specific set of observables. Here lies the motivation of the second part of our study. There are several inequalities to witness EPR steering \cite{CJWR2009,si1,si2,si3,si4,si6,si7,si5,CFFW2015,GC2016,tighter,entrop1,entrop2}. However, in the present study we restrict ourselves to the linear steering inequality \cite{CJWR2009} as this inequality can be used to probe EPR steering with \textit{arbitrary} number of dichotomic measurements on both sides. In particular, we address the following question:  which quantum states achieve the largest quantum violations of the $2$-settings and $3$-settings linear steering inequalities attainable for a given set of spin-$\frac{1}{2}$ observables (if there exists any quantum violation for that given set of observables).
	
	The plan of the paper is as follows. In Section \ref{secii} the basic
	notions of EPR steering and linear steering inequalities have been presented for the purpose of the present study. In Section \ref{seciii}, we
	present the necessary and sufficient condition under which $2$-settings linear steering inequality will be violated for any given set of spin-$\frac{1}{2}$ observables at trusted and untrusted parties' sides. In Section \ref{seciv}, we illustrate which quantum states provide the maximum quantum violations of the $2$-settings and $3$-settings linear steering inequalities attainable for any given set of spin-$\frac{1}{2}$ observables (if there exists any quantum violation for that given set of observables). Finally, in the concluding Section \ref{secv}, we elaborate a bit on the significance of the results obtained.

\section{ EPR Steering and Linear steering inequalities}\label{secii}
Let us recapitulate the concept of EPR steering as introduced by Wiseman et. al. \cite{WJD2007,JWD2007}. Let us consider that the joint state $\rho_{AB}$ of a pair of systems is shared between two spatially separated parties, say, Alice and Bob. Let $\mathcal{D_{\alpha}}$ and $\mathcal{D_{\beta}}$ denote the sets of observables in the Hilbert space of Alice's and Bob's systems, respectively. An element of $\mathcal{D_{\alpha}}$ is denoted by $A$, with a set of outcomes labeled by $a \in \mathcal{L}(A)$, and similarly an element of $\mathcal{D_{\beta}}$ is denoted by $B$, with a set of outcomes labeled by $b \in \mathcal{L}(B)$. The joint probability of obtaining the outcomes $a$ and $b$, when measurements $A$ and $B$ are performed locally by Alice and Bob on the joint state $\rho_{AB}$, respectively, is given by $P(a, b|A, B; \rho_{AB})$. The joint state $\rho_{AB}$ of the shared system is steerable by Alice to Bob \textit{iff} it is not the case that for all $a \in \mathcal{L}(A)$, $b \in \mathcal{L}(B)$, $A \in \mathcal{D_{\alpha}}$, $B \in \mathcal{D_{\beta}}$, the joint probability distributions can be written in the form,
\begin{equation}
\label{LHVLHS}
P(a, b|A, B; \rho_{AB}) = \sum_{\lambda} p(\lambda) P(a|A, \lambda) P(b|B, \rho_{\lambda}),
\end{equation}
where $p(\lambda)$ is the probability distribution over the hidden variables $\lambda$, $\sum_{\lambda} p(\lambda) = 1$; $P(a|A, \lambda)$ denotes an arbitrary probability distribution and $P(b|B, \rho_{\lambda})$ denotes the quantum probability of obtaining the outcome $b$ when measurement $B$ is performed on the quantum state (local hidden state) $\rho_{\lambda}$. In other words, the joint state $\rho_{AB}$ of the shared system will be called steerable if there is at least one measurement strategy for which the joint probability distribution does not satisfy a local hidden variable-local hidden state (LHV-LHS) model (\ref{LHVLHS}). One important point to be stressed here is that if for a given measurement
strategy the joint probability distribution has a LHV-LHS model, this does not imply that the joint state of the shared system is not steerable, since there could be another strategy that does not.

In \cite{CJWR2009} authors have constructed the following series of steering inequalities
to check whether a bipartite state is steerable from Alice to Bob when both the parties are allowed to perform $n$ dichotomic measurements on his or her part:
\begin{equation}
\label{linear_n}
F_n = \frac{1}{\sqrt{n}} \Bigl\lvert \sum_{i=1}^{n} \langle A_i \otimes B_i \rangle \Bigl\lvert \leq 1.
\end{equation}
These are called $n$-settings linear steering inequalities. The linear steering inequalities with $n=2$ and $n=3$ (which are relevant for spin-$\frac{1}{2}$ observables) are of the form:
\begin{equation}
\label{linear_2}
F_2 = \frac{1}{\sqrt{2}} \Bigl\lvert \sum_{i=1}^{2} \langle A_i \otimes B_i \rangle \Bigl\lvert \leq 1,
\end{equation}
and
\begin{equation}
\label{linear_3}
F_3 = \frac{1}{\sqrt{3}} \Bigl\lvert \sum_{i=1}^{3} \langle A_i \otimes B_i \rangle \Bigl\lvert \leq 1.
\end{equation}
Here $A_i = \hat{u}_i \cdot \vec{\sigma}$, $B_i = \hat{v}_i \cdot \vec{\sigma}$, $\vec{\sigma}$ = $(\sigma_1, \sigma_2, \sigma_3)$ is a vector composed of the Pauli matrices, $\hat{u}_i \in R^3$ are unit vectors, $\hat{v}_i \in R^3$ are orthonormal vectors, $\mu_2$ = $\{\hat{u}_1, \hat{u}_2, \hat{v}_1, \hat{v}_2\}$ is the set of measurement directions corresponding to $2$-settings linear steering inequality (\ref{linear_2}), $\mu_3$ = $\{\hat{u}_1, \hat{u}_2, \hat{u}_3, \hat{v}_1, \hat{v}_2, \hat{v}_3 \}$ is the set of measurement directions corresponding to $3$-settings linear steering inequality (\ref{linear_3}), $\langle A_i \otimes B_i \rangle$ = Tr$(\rho A_i \otimes B_i)$, where  $\rho \in \mathcal{H}_A \otimes \mathcal{H}_B$ is some bipartite quantum state shared between two spatially separated parties (Alice and Bob). Quantum violation of any of the above inequalities implies that the shared state is steerable from Alice to Bob. Note that the trusted party's (Bob) measurement directions are mutually orthogonal in case of $2$-settings and $3$-settings linear steering inequalities (\ref{linear_2} - \ref{linear_3}).

\section{ Spin-$\frac{1}{2}$ observables and $2$-settings linear steering inequality}\label{seciii}

In this Section we are going to investigate the necessary and sufficient condition for violating 2-settings linear steering inequality for any given set of spin-$\frac{1}{2}$ observables. We start by proposing the following theorem.

\begin{theorem} 
	For any given set of two spin-$\frac{1}{2}$ observables at Alice's side (untrusted party) and any given set of two spin-$\frac{1}{2}$ observables in mutually orthogonal directions at Bob's side (trusted party), there exists at least one  bipartite state for which the 2-settings linear steering inequality is violated \textit{iff} Alice's (untrusted party) spin-$\frac{1}{2}$ observables are non-commuting.
	\label{thm:0}
\end{theorem}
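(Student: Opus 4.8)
The plan is to compute the maximum of $F_2 = \frac{1}{\sqrt 2}\bigl\lvert \langle A_1\otimes B_1\rangle + \langle A_2\otimes B_2\rangle\bigr\rvert$ over all bipartite qubit states $\rho$ for \emph{fixed} observables, and then read off when this maximum exceeds $1$. Write $A_i = \hat u_i\cdot\vec\sigma$ and $B_i = \hat v_i\cdot\vec\sigma$ with $\hat v_1\perp\hat v_2$. Since the quantity is linear in $\rho$, the maximum over all states is attained at a pure state, so it suffices to maximize $\lvert\langle\psi\rvert M\lvert\psi\rangle\rvert$ where $M = A_1\otimes B_1 + A_2\otimes B_2$ is Hermitian; this maximum equals the largest-magnitude eigenvalue $\lVert M\rVert$ of $M$. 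Hence the inequality is violated for some state \emph{iff} $\lVert M\rVert > \sqrt 2$.

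The key computational step is to evaluate $M^2$. Using $(\hat u_i\cdot\vec\sigma)(\hat u_j\cdot\vec\sigma) = (\hat u_i\cdot\hat u_j)\,\mathbb{1} + i(\hat u_i\times\hat u_j)\cdot\vec\sigma$ and the same for the $\hat v$'s, one gets
\begin{equation}
M^2 = \bigl(2 + (\hat u_1\cdot\hat u_2)(\{A_1,A_2\}\text{-type terms})\bigr)\mathbb{1} + (\text{cross terms}),
\end{equation}
where the orthogonality $\hat v_1\cdot\hat v_2 = 0$ kills the identity contribution from the $\{B_1,B_2\}$ anticommutator and leaves the cross term proportional to $(\hat u_1\times\hat u_2)\cdot\vec\sigma \otimes (\hat v_1\times\hat v_2)\cdot\vec\sigma$. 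A clean way to organize this: since $\hat v_1,\hat v_2$ are orthonormal, $\hat v_1\times\hat v_2$ is a unit vector completing an orthonormal triad, so $(\hat v_1\times\hat v_2)\cdot\vec\sigma$ is a unit-norm Pauli observable and the cross term is a tensor product of traceless involutions. One finds $M^2 = 2\,\mathbb{1} + 2\,(\hat u_1\times\hat u_2)\cdot\vec\sigma\otimes(\hat v_1\times\hat v_2)\cdot\vec\sigma$ (up to checking the exact coefficient), whose largest eigenvalue is $2 + 2\lvert\hat u_1\times\hat u_2\rvert = 2(1 + \sin\theta)$ where $\theta$ is the angle between $\hat u_1$ and $\hat u_2$. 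Therefore $\lVert M\rVert = \sqrt{2(1+\sin\theta)}$ and $\max_\rho F_2 = \sqrt{1+\sin\theta}$.

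Finally, $\max_\rho F_2 > 1 \iff \sin\theta > 0 \iff \hat u_1$ and $\hat u_2$ are neither parallel nor antiparallel $\iff$ $[A_1,A_2] = 2i(\hat u_1\times\hat u_2)\cdot\vec\sigma \neq 0$, i.e.\ the untrusted party's observables are non-commuting. This establishes both directions of the \emph{iff}. I expect the main obstacle to be the bookkeeping in the $M^2$ computation — in particular verifying that every term not proportional to the displayed cross term either vanishes by $\hat v_1\cdot\hat v_2 = 0$ or assembles into the scalar $2\,\mathbb{1}$, and confirming the cross-term coefficient — together with the (standard but worth stating) fact that the maximum-magnitude eigenvalue of a Hermitian operator on a bipartite space is the sup of $\lvert\langle\psi\rvert M\lvert\psi\rangle\rvert$ over product-free pure states, which justifies reducing the state optimization to an operator-norm computation.
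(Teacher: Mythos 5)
Your proposal is correct and takes essentially the same route as the paper: square the steering operator, observe that orthogonality of $\hat v_1$ and $\hat v_2$ eliminates the anticommutator (identity) contribution, read off the largest eigenvalue $2(1+|\sin\theta_{u_1u_2}|)$ of $M^2$, and conclude that the state-optimized value $\sqrt{1+|\sin\theta_{u_1u_2}|}$ exceeds $1$ exactly when $[A_1,A_2]\neq 0$. The only cosmetic difference is that you use the Pauli product identity directly where the paper separates the computation into commutator and anticommutator pieces (and your cross-term coefficient should come out as $-2(\hat u_1\times\hat u_2)\cdot\vec\sigma\otimes(\hat v_1\times\hat v_2)\cdot\vec\sigma$, which does not affect the eigenvalue analysis).
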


\begin{proof}
	
	The operator corresponding to $2$-settings linear steering inequality (\ref{linear_2}) can be written as,
	\begin{equation}
	\label{steope}
	O_{F_2} = \frac{1}{\sqrt{2}} (A_1 \otimes B_1 + A_2 \otimes B_2),
	\end{equation}
	where $A_i = \hat{u}_i \cdot \vec{\sigma}$, $B_i = \hat{v}_i \cdot \vec{\sigma}$ as discussed earlier. Hence, the square of the above steering operator is given by,
	\begin{equation}
	O^2_{F_2} = \frac{1}{2}(2 \mathbb{I} \otimes \mathbb{I} + A_1 A_2 \otimes B_1 B_2 + A_2 A_1 \otimes B_2 B_1 ).
	\end{equation}
	The above equation can be written in terms of the commutators and anti-commutators of the observables,
	\begin{equation}
	\label{steersq}
	O^2_{F_2} = \mathbb{I} \otimes \mathbb{I} + \frac{1}{4} (\{A_1, A_2\} \otimes \{B_1, B_2\}+[A_1, A_2] \otimes [B_1, B_2]),
	\end{equation}
	where $\{A_1, A_2\} = A_1 A_2 + A_2 A_1$ and $[A_1, A_2] = A_1 A_2 - A_2 A_1 $. Other commutators and anti-commutators are similarly defined. 
	
	For any two unit vectors $\hat{a}$ and $\hat{b}$, the commutation relation $[\hat{a} \cdot \vec{\sigma}, \hat{b} \cdot \vec{\sigma}]$ is given by,
	\begin{equation}
	[\hat{a} \cdot \vec{\sigma}, \hat{b} \cdot \vec{\sigma}] = 2 i (\hat{n}_{ab} \cdot \vec{\sigma} ) \sin \theta_{ab},
	\end{equation}
	where $\hat{n}_{ab}$ = $\dfrac{\hat{a} \times \hat{b}}{|\hat{a} \times \hat{b}|}$, is an unit vector perpendicular to the plane containing $\hat{a}$ and $\hat{b}$; $\theta_{ab}$ is the angle between the unit vectors $\hat{a}$ and $\hat{b}$. On the other hand, the anti-commutation relation $\{ \hat{a} \cdot \vec{\sigma}, \hat{b} \cdot \vec{\sigma} \}$ is given by,
	\begin{equation}
	\{ \hat{a} \cdot \vec{\sigma}, \hat{b} \cdot \vec{\sigma} \} = 2 \cos \theta_{ab} \mathbb{I}.
	\end{equation} 
	
	Using the above relations, Eq. (\ref{steersq}) can be rewritten as,
	\begin{align}
	O^2_{F_2}  =& \mathbb{I}\otimes \mathbb{I} + \cos\theta_{u_1 u_2} \cos\theta_{v_1 v_2} \mathbb{I} \otimes  \mathbb{I}  - \sin\theta_{u_1 u_2} \sin\theta_{v_1 v_2} (\hat{n}_{u_1 u_2} \cdot \vec{\sigma} ) \otimes  (\hat{n}_{v_1 v_2} \cdot \vec{\sigma} ),
	\end{align} 
	where $\theta_{u_1 u_2}$, $\theta_{v_1 v_2}$, $\hat{n}_{u_1 u_2}$, $\hat{n}_{v_1 v_2}$ are defined similarly. 
	
	Since the two spin-$\frac{1}{2}$ observables at trusted party's side are defined to be in orthogonal direction \cite{CJWR2009,costa}, i.e., $\theta_{v_1 v_2} = \frac{\pi}{2}$, we have,
	\begin{equation}
	\label{sqf}
	O^2_{F_2}  =\mathbb{I} \otimes \mathbb{I} - \sin\theta_{u_1 u_2} (\hat{n}_{u_1 u_2} \cdot \vec{\sigma} ) \otimes  (\hat{n}_{v_1 v_2} \cdot \vec{\sigma} ).
	\end{equation}
	Hence, the largest eigenvalue ($\lambda$) of $O^2_{F_2}$ is,
	\begin{equation}
	\label{ev}
	\lambda = 1 + |\sin\theta_{u_1 u_2}| 
	\end{equation}
	Corresponding to the eigenvalue (\ref{ev}), the largest eigenvalue ($\mu$) of the steering operator ($O_{F_2}$) is,
	\begin{equation}
	\label{lev}
	\mu = \sqrt{1 + |\sin\theta_{u_1 u_2}|}
	\end{equation}
	From Eq. (\ref{lev}) it is evident that for any given set of two spin-$\frac{1}{2}$ observables at Alice's side (untrusted party) and any given set of two spin-$\frac{1}{2}$ observables in mutually orthogonal directions at Bob's side (trusted party), there exists at least one bipartite qubit state for which the 2-settings linear steering inequality is violated \textit{iff} the largest eigenvalue $\mu$ of the steering operator $O_{F_2}$ is greater than $1$, i. e., $|\sin\theta_{u_1 u_2}| > 0$, i. e., $\sin\theta_{u_1 u_2} \neq 0$. Hence, we can conclude that for any given set of two spin-$\frac{1}{2}$ observables at Alice's side (untrusted party) and any given set of two spin-$\frac{1}{2}$ observables in mutually orthogonal directions at Bob's side (trusted party), there exists at least one bipartite qubit state for which the 2-settings linear steering inequality is violated \textit{iff} Alice's (untrusted party) two spin-$\frac{1}{2}$ observables are non-commuting.
\end{proof}

\textit{Corollary 1}: \textit{The largest quantum mechanical violation of $2$-settings inequality is given by $\sqrt{2}$} as the maximum value of the largest eigenvalue $\mu$ of the steering operator ($O_{F_2}$) is $\sqrt{2}$ when Alice's (untrusted party) spin-$\frac{1}{2}$ observables are in mutually orthogonal directions.\\

\textit{Corollary 2}: In Ref. \cite{guru}, the maximum magnitude of the left hand side of CHSH inequality was derived. For any given set of two spin-$\frac{1}{2}$ observables ($A_1$, $A_2$, $B_1$, $B_2$; $A_i = \hat{u}_i \cdot \vec{\sigma}$, $B_i = \hat{v}_i \cdot \vec{\sigma}$) at the spatially separated two party's (Alice and Bob) side, CHSH inequality will be violated \textit{iff} $|\sin\theta_{u_1 u_2} \sin\theta_{v_1 v_2}| > 0$ \cite{guru}. Hence, one can conclude that \textit{for any given set of two spin-$\frac{1}{2}$ observables at Alice's side (untrusted party) and any given set of two spin-$\frac{1}{2}$ observables in mutually orthogonal directions ($\theta_{v_1 v_2} = \frac{\pi}{2}$) at Bob's side (trusted party), the $2$-settings linear steering inequality will be violated iff CHSH inequality is violated with that given set of spin-$\frac{1}{2}$ observables at two spatially separated party's side.}

\section{Spin-$\frac{1}{2}$ observables and states which give maximum violations of $n$-settings linear steering inequalities}\label{seciv}

In this Section we are going to address the following question: which quantum states produce the maximum quantum violations of $2$-settings linear steering inequality (\ref{linear_2}) and $3$-settings linear steering inequality (\ref{linear_3}) attainable for any given set of spin-$\frac{1}{2}$ observables. We start by proposing the following theorem:

\begin{theorem}
	If there exists any quantum violation of the $2$-settings linear steering inequality (\ref{linear_2}) for any two given spin-$\frac{1}{2}$ observables at Alice's side and any two given spin-$\frac{1}{2}$ observables in mutually orthogonal directions at Bob's side, then that quantum violation reaches the maximum value (attainable for that given set of observables) when the shared state is maximally entangled two-qubit state.
	\label{th2}
\end{theorem}
\begin{proof}
	For any set of given spin-$\frac{1}{2}$ observables at Alice's and Bob's sides, the maximum attainable magnitude of the left hand side of $2$-settings linear steering inequality given by (\ref{linear_2}) must be achieved by some pure state as this state is the eigenstate corresponding to the largest eigenvalue of the operator (\ref{steope}) associated with $2$-settings linear steering for the given set of spin-$\frac{1}{2}$ observables. Furthermore, this eigenstate must be a two-qubit state as the operator (\ref{steope}) associated with $2$-settings linear steering for the given set of spin-$\frac{1}{2}$ observables belongs to the Hilbert space in $\mathbb{C}^2 \otimes \mathbb{C}^2$.  Any pure two-qubit state can be written in the following form, called the Schmidt decomposition \cite{sd1,sd2}:
	\begin{equation}
	\label{purestate}
	| \psi \rangle = \cos \alpha |00 \rangle + \sin \alpha |11 \rangle,
	\end{equation}
	where $0 \leq \alpha \leq \frac{\pi}{2}$, $\{ |0\rangle, |1\rangle \}$ is an orthonormal basis in the Hilbert space in $\mathcal{C}^2$. Let the operators corresponding to the given two spin-$\frac{1}{2}$ observables at Alice's side are $\hat{a}_1 \cdot \vec{\sigma}$ and $\hat{a}_2 \cdot \vec{\sigma}$ and that at Bob's side are $\hat{b}_1 \cdot \vec{\sigma}$ and $\hat{b}_2 \cdot \vec{\sigma}$. Any spin-$\frac{1}{2}$ observable can always be written as a linear combination of any three spin-$\frac{1}{2}$ observables in mutually orthogonal directions in the following way:
	\begin{align}
	\hat{a}_i \cdot \vec{\sigma} = & \sin \theta_i^a \cos \phi_i^a (\hat{m}_1 \cdot \vec{\sigma}) + \sin \theta_i^a \sin \phi_i^a (\hat{m}_2 \cdot \vec{\sigma}) + \cos \theta_i^a (\hat{m}_3 \cdot \vec{\sigma}),
	\label{ao}
	\end{align}
	and 
	\begin{align}
	\hat{b}_j \cdot \vec{\sigma} = & \sin \theta_j^b \cos \phi_j^b (\hat{m}_1 \cdot \vec{\sigma}) + \sin \theta_j^b \sin \phi_j^b (\hat{m}_2 \cdot \vec{\sigma}) + \cos \theta_j^b (\hat{m}_3 \cdot \vec{\sigma}),
	\label{bo}
	\end{align}
	where $(\hat{m}_1 \cdot \vec{\sigma})$, $(\hat{m}_2 \cdot \vec{\sigma})$ and $(\hat{m}_3 \cdot \vec{\sigma})$ are three spin-$\frac{1}{2}$ observables in mutually orthogonal directions. $0 \leq \theta_i^a \leq \pi$, $0 \leq \phi_i^a \leq 2 \pi$, $0 \leq \theta_j^b \leq \pi$, $0 \leq \phi_j^b \leq 2 \pi$. Let us construct the above three spin-$\frac{1}{2}$ observables in mutually orthogonal directions in the following way,
	\begin{equation}
	\label{ob1}
	(\hat{m}_1 \cdot \vec{\sigma}) = |0 \rangle \langle 0 | - |1 \rangle \langle 1|,
	\end{equation}
	\begin{equation}
	\label{ob2}
	(\hat{m}_2 \cdot \vec{\sigma}) = |+ \rangle \langle + | - |- \rangle \langle -|,
	\end{equation}
	and 
	\begin{equation}
	\label{ob3}
	(\hat{m}_3 \cdot \vec{\sigma}) = |\uparrow \rangle \langle \uparrow  | - |\downarrow \rangle \langle \downarrow |.
	\end{equation}
	Here, $\{ |+\rangle, |-\rangle \}$ is an orthonormal basis in the Hilbert space in $\mathcal{C}^2$ given by,
	\begin{equation}
	|+\rangle = \frac{1}{\sqrt{2}} ( |0\rangle + |1 \rangle),
	\end{equation}
	and 
	\begin{equation}
	|-\rangle = \frac{1}{\sqrt{2}} ( |0\rangle - |1 \rangle).
	\end{equation}
	$\{ |\uparrow\rangle, |\downarrow\rangle \}$ is another orthonormal basis in the Hilbert space in $\mathcal{C}^2$ given by,
	\begin{equation}
	|\uparrow\rangle = \frac{1}{\sqrt{2}} ( |0\rangle + i |1 \rangle),
	\end{equation}
	and 
	\begin{equation}
	|\downarrow\rangle = \frac{1}{\sqrt{2}} ( |0\rangle - i |1 \rangle).
	\end{equation}
	Now, for any two spin-$\frac{1}{2}$ observables $\hat{u} \cdot \vec{\sigma}$ and $\hat{v} \cdot \vec{\sigma}$, the anticommutation relation is given by,
	\begin{equation}
	\big\{\hat{u} \cdot \vec{\sigma} , \hat{v} \cdot \vec{\sigma} \big\} = 2 \hat{u} \cdot \hat{v} \mathbb{I}.
	\end{equation}
	From above Equation it is clear that $\big\{\hat{u} \cdot \vec{\sigma} , \hat{v} \cdot \vec{\sigma} \big\} = 0$ iff $\hat{u}$ and $\hat{v}$ are mutually orthogonal, i. e., $\hat{u} \cdot \vec{\sigma}$ and $\hat{v} \cdot \vec{\sigma}$ are two spin-$\frac{1}{2}$ observables in mutually orthogonal direction. Now, from Eqs.(\ref{ob1}), (\ref{ob2}), (\ref{ob3}) it can easily be checked that $\big\{\hat{m}_1 \cdot \vec{\sigma} , \hat{m}_2 \cdot \vec{\sigma} \big\}$ = $\big\{\hat{m}_1 \cdot \vec{\sigma} , \hat{m}_3 \cdot \vec{\sigma} \big\}$ = $\big\{\hat{m}_2 \cdot \vec{\sigma} , \hat{m}_3 \cdot \vec{\sigma} \big\}$ = $0$. Hence, $\hat{m}_1 \cdot \vec{\sigma}$, $\hat{m}_2 \cdot \vec{\sigma}$ and $\hat{m}_3 \cdot \vec{\sigma}$ given by Eqs.(\ref{ob1}), (\ref{ob2}) and (\ref{ob3}), respectively, are indeed three spin-$\frac{1}{2}$ observables in mutually orthogonal directions.

		Now consider that Alice and Bob share an arbitrary pure two qubit state (\ref{purestate}) as only a pure two-qubit state will achieve the maximum quantum violation of $2$-settings linear steering inequality (\ref{linear_2}) attainable for any given set of spin-$\frac{1}{2}$ observables. With these the left hand side of the $2$-settings linear steering inequality given by (\ref{linear_2}) becomes,
		\begin{align}
		F_2 = & \dfrac{1}{\sqrt{2}} \Big| \big[\cos \theta_1^a \cos \theta_1^b + \cos \theta_2^a \cos \theta_2^b \big]  + \big[ \cos ( \phi_1^a + \phi_1^b) \sin \theta_1^a \sin \theta_1^b  + \cos ( \phi_2^a + \phi_2^b) \sin \theta_2^a \sin \theta_2^b  \big]  \sin (2 \alpha) \Big|.
		\label{2s}
		\end{align}
		Note that $\sin (2 \alpha) \geq 0$ as $0 \leq \alpha \leq \frac{\pi}{2}$. There are the following possible cases, one of which will appear for any given set of observables:
		
		\textbf{Case I:} $\theta_i^a$, $\theta_j^b$, $\phi_i^a$, $\phi_j^b$ ($i, j = 1, 2$) are such that $\big[ \cos \theta_1^a \cos \theta_1^b + \cos \theta_2^a \cos \theta_2^b \big]$  $\geq 0$ and $ \big[\cos ( \phi_1^a + \phi_1^b) \sin \theta_1^a \sin \theta_1^b  + \cos ( \phi_2^a + \phi_2^b) \sin \theta_2^a \sin \theta_2^b \big] \geq 0$. In this case, from Eq.(\ref{2s}) it is clear that for any fixed values of $\theta_i^a$, $\theta_j^b$, $\phi_i^a$, $\phi_j^b$ ($i, j = 1, 2$), the left hand side of the $2$-settings linear steering inequality given by (\ref{linear_2}) will be maximized if $\alpha = \frac{\pi}{4}$. 
		
		\textbf{Case II:} $\theta_i^a$, $\theta_j^b$, $\phi_i^a$, $\phi_j^b$ ($i, j = 1, 2$) are such that $\big[ \cos \theta_1^a \cos \theta_1^b + \cos \theta_2^a \cos \theta_2^b \big]$  $\leq 0$ and $ \big[\cos ( \phi_1^a + \phi_1^b) \sin \theta_1^a \sin \theta_1^b  + \cos ( \phi_2^a + \phi_2^b) \sin \theta_2^a \sin \theta_2^b \big] \leq 0$. In this case also, for any fixed values of $\theta_i^a$, $\theta_j^b$, $\phi_i^a$, $\phi_j^b$ ($i, j = 1, 2$), the left hand side of the $2$-settings linear steering inequality given by (\ref{linear_2}) will be maximized if $\alpha = \frac{\pi}{4}$.
		
		\textbf{Case III-A:} $\theta_i^a$, $\theta_j^b$, $\phi_i^a$, $\phi_j^b$ ($i, j = 1, 2$) are such that $\big[ \cos \theta_1^a \cos \theta_1^b + \cos \theta_2^a \cos \theta_2^b \big]$  $\geq 0$; $ \big[\cos ( \phi_1^a + \phi_1^b) \sin \theta_1^a \sin \theta_1^b  + \cos ( \phi_2^a + \phi_2^b) \sin \theta_2^a \sin \theta_2^b \big] \leq 0$ and $\big| \cos \theta_1^a \cos \theta_1^b + \cos \theta_2^a \cos \theta_2^b \big|$  $\geq$ $\big|\cos ( \phi_1^a + \phi_1^b) \sin \theta_1^a \sin \theta_1^b  + \cos ( \phi_2^a + \phi_2^b) \sin \theta_2^a \sin \theta_2^b \big|$. In this case, from Eq.(\ref{2s}) we get $F_2$ = $\dfrac{1}{\sqrt{2}} \Big[ \big|\cos \theta_1^a \cos \theta_1^b + \cos \theta_2^a \cos \theta_2^b \big|  - \big| \cos ( \phi_1^a + \phi_1^b) \sin \theta_1^a \sin \theta_1^b  + \cos ( \phi_2^a + \phi_2^b) \sin \theta_2^a \sin \theta_2^b  \big|  \sin (2 \alpha) \Big]$. Hence, here for any fixed values of $\theta_i^a$, $\theta_j^b$, $\phi_i^a$, $\phi_j^b$ ($i, j = 1, 2$), the left hand side of the $2$-settings linear steering inequality given by (\ref{linear_2}) will be maximized if $\alpha = 0$. But $\alpha = 0$ implies that the shared state is separable. Therefore, this maximum magnitude of left  hand side of the $2$-settings linear steering inequality (\ref{linear_2}) does not lead to any quantum violation (i.e., this maximum magnitude is less than or equal to $1$) as any separable state is unsteerable and it cannot violate the $2$-settings linear steering inequality (\ref{linear_2}).
		
		\textbf{Case III-B:} $\theta_i^a$, $\theta_j^b$, $\phi_i^a$, $\phi_j^b$ ($i, j = 1, 2$) are such that $\big[ \cos \theta_1^a \cos \theta_1^b + \cos \theta_2^a \cos \theta_2^b \big]  \geq 0$; $ \big[\cos ( \phi_1^a + \phi_1^b) \sin \theta_1^a \sin \theta_1^b  + \cos ( \phi_2^a + \phi_2^b) \sin \theta_2^a \sin \theta_2^b \big] \leq 0$ and $\big| \cos \theta_1^a \cos \theta_1^b + \cos \theta_2^a \cos \theta_2^b \big|$  $\leq$ $\big|\cos ( \phi_1^a + \phi_1^b) \sin \theta_1^a \sin \theta_1^b  + \cos ( \phi_2^a + \phi_2^b) \sin \theta_2^a \sin \theta_2^b \big|$. In this case, from Eq.(\ref{2s}) we get $F_2$ = $\dfrac{1}{\sqrt{2}} \Big[ - \big|\cos \theta_1^a \cos \theta_1^b + \cos \theta_2^a \cos \theta_2^b \big|  + \big| \cos ( \phi_1^a + \phi_1^b) \sin \theta_1^a \sin \theta_1^b  + \cos ( \phi_2^a + \phi_2^b) \sin \theta_2^a \sin \theta_2^b  \big|  \sin (2 \alpha) \Big]$. Hence, here for any fixed values of $\theta_i^a$, $\theta_j^b$, $\phi_i^a$, $\phi_j^b$ ($i, j = 1, 2$), the left hand side of the $2$-settings linear steering inequality given by (\ref{linear_2}) will be maximized if $\alpha =\frac{\pi}{4}$.
		
		\textbf{Case IV-A:} $\theta_i^a$, $\theta_j^b$, $\phi_i^a$, $\phi_j^b$ ($i, j = 1, 2$) are such that $\big[ \cos \theta_1^a \cos \theta_1^b + \cos \theta_2^a \cos \theta_2^b \big]  \leq 0$; $ \big[\cos ( \phi_1^a + \phi_1^b) \sin \theta_1^a \sin \theta_1^b  + \cos ( \phi_2^a + \phi_2^b) \sin \theta_2^a \sin \theta_2^b \big] \geq 0$ and $\big| \cos \theta_1^a \cos \theta_1^b + \cos \theta_2^a \cos \theta_2^b \big|$  $\geq$ $\big|\cos ( \phi_1^a + \phi_1^b) \sin \theta_1^a \sin \theta_1^b  + \cos ( \phi_2^a + \phi_2^b) \sin \theta_2^a \sin \theta_2^b \big|$. Following the argument presented in Case III-A, we can state that, for any fixed values of $\theta_i^a$, $\theta_j^b$, $\phi_i^a$, $\phi_j^b$ ($i, j = 1, 2$), the left hand side of the $2$-settings linear steering inequality given by (\ref{linear_2}) will be maximized if $\alpha = 0$, which does not correspond to any quantum violation of the $2$-settings linear steering inequality (\ref{linear_2}).
		
		\textbf{Case IV-B:} $\theta_i^a$, $\theta_j^b$, $\phi_i^a$, $\phi_j^b$ ($i, j = 1, 2$) are such that $\big[ \cos \theta_1^a \cos \theta_1^b + \cos \theta_2^a \cos \theta_2^b \big]  \leq 0$; $ \big[\cos ( \phi_1^a + \phi_1^b) \sin \theta_1^a \sin \theta_1^b  + \cos ( \phi_2^a + \phi_2^b) \sin \theta_2^a \sin \theta_2^b \big] \geq 0$ and $\big| \cos \theta_1^a \cos \theta_1^b + \cos \theta_2^a \cos \theta_2^b \big|$  $\leq$ $\big|\cos ( \phi_1^a + \phi_1^b) \sin \theta_1^a \sin \theta_1^b  + \cos ( \phi_2^a + \phi_2^b) \sin \theta_2^a \sin \theta_2^b \big|$. In this case we follow the argument presented in Case III-B. Here for any fixed values of $\theta_i^a$, $\theta_j^b$, $\phi_i^a$, $\phi_j^b$ ($i, j = 1, 2$), the left hand side of the $2$-settings linear steering inequality given by (\ref{linear_2}) will be maximized if $\alpha =\frac{\pi}{4}$.
		
		Note that in Eq.(\ref{2s}) we have not assumed that the two spin-$\frac{1}{2}$ observables at Bob's side are in mutually orthogonal directions. Hence, the above result holds even if we assume that $\hat{b}_1 \cdot \hat{b}_2$ = $0$.
	
\end{proof}

Now we are going to address the aforementioned question in the context of $3$-settings linear steering inequality (\ref{linear_3}).  In this case, we propose the following theorem:
\begin{theorem}
	If there exists any quantum violation of the $3$-settings linear steering inequality (\ref{linear_3}) for any three given spin-$\frac{1}{2}$ observables at Alice's side and any three given spin-$\frac{1}{2}$ observables in mutually orthogonal directions at Bob's side, then that quantum violation reaches the maximum value (attainable for that given set of observables) when the shared state is maximally entangled two-qubit state.
	\label{th3}
\end{theorem}
\begin{proof}
	Following the similar argument presented in the proof of Theorem \ref{th2} we can state that the largest magnitude of the left hand side of $3$-settings linear steering inequality (\ref{linear_3})  attainable for any given set of spin-$\frac{1}{2}$ observables on the spatially separated two parties' sides must be achieved by some pure two-qubit state. Let us consider that Alice and Bob share an arbitrary pure two qubit state (\ref{purestate}). The operators corresponding to the given three spin-$\frac{1}{2}$ observables at Alice's side are $\hat{a}_1 \cdot \vec{\sigma}$, $\hat{a}_2 \cdot \vec{\sigma}$ and $\hat{a}_3 \cdot \vec{\sigma}$, where $\hat{a}_i \cdot \vec{\sigma}$ ($i = 1, 2, 3$) is given by Eq.(\ref{ao}). On the other hand, the operators corresponding to the given three spin-$\frac{1}{2}$ observables at Bob's side are $\hat{b}_1 \cdot \vec{\sigma}$, $\hat{b}_2 \cdot \vec{\sigma}$ and $\hat{b}_3 \cdot \vec{\sigma}$, where $\hat{b}_j \cdot \vec{\sigma}$ ($j = 1, 2, 3$) is given by Eq.(\ref{bo}). With these the left hand side of the $3$-settings linear steering inequality given by (\ref{linear_3}) becomes,
		\begin{align}
		F_3 = & \dfrac{1}{\sqrt{3}} \Big| \big[ \cos \theta_1^a \cos \theta_1^b + \cos \theta_2^a \cos \theta_2^b + \cos \theta_3^a \cos \theta_3^b \big]  + \big[ \cos ( \phi_1^a + \phi_1^b) \sin \theta_1^a \sin \theta_1^b  + \cos ( \phi_2^a + \phi_2^b) \sin \theta_2^a \sin \theta_2^b \nonumber \\
		& +  \cos ( \phi_3^a + \phi_3^b) \sin \theta_3^a \sin \theta_3^b  \big] \sin (2 \alpha) \Big|. 
		\label{3s}
		\end{align}
		In this case also the following cases appear:
		
		\textbf{Case I:} $\theta_i^a$, $\theta_j^b$, $\phi_i^a$, $\phi_j^b$ ($i, j = 1, 2, 3$) are such that $\big[ \cos \theta_1^a \cos \theta_1^b$ $+ \cos \theta_2^a$ $\cos \theta_2^b$ $+ \cos \theta_3^a \cos \theta_3^b \big]$  $\geq 0$ and $\big[ \cos ( \phi_1^a + \phi_1^b) \sin \theta_1^a \sin \theta_1^b$  $+ \cos ( \phi_2^a$ $+ \phi_2^b)$ $\sin \theta_2^a$ $\sin \theta_2^b +  \cos ( \phi_3^a + \phi_3^b) \sin \theta_3^a \sin \theta_3^b  \big]$ $\geq 0$. 
		
		\textbf{Case II:} $\theta_i^a$, $\theta_j^b$, $\phi_i^a$, $\phi_j^b$ ($i, j = 1, 2, 3$) are such that $\big[ \cos \theta_1^a \cos \theta_1^b + \cos \theta_2^a$ $\cos \theta_2^b + \cos \theta_3^a \cos \theta_3^b \big]  \leq 0$ and $\big[ \cos ( \phi_1^a + \phi_1^b) \sin \theta_1^a \sin \theta_1^b  + \cos ( \phi_2^a + \phi_2^b) \sin \theta_2^a$ $\sin \theta_2^b +  \cos ( \phi_3^a + \phi_3^b) \sin \theta_3^a \sin \theta_3^b  \big] \leq 0$. 
		
		\textbf{Case III-A:} $\theta_i^a$, $\theta_j^b$, $\phi_i^a$, $\phi_j^b$ ($i, j = 1, 2,3$) are such that $\big[ \cos \theta_1^a \cos \theta_1^b + \cos \theta_2^a \cos \theta_2^b + \cos \theta_3^a \cos \theta_3^b \big]  \geq 0$, $\big[ \cos ( \phi_1^a + \phi_1^b) \sin \theta_1^a \sin \theta_1^b  + \cos ( \phi_2^a + \phi_2^b) \sin \theta_2^a$ $\sin \theta_2^b +  \cos ( \phi_3^a + \phi_3^b) \sin \theta_3^a \sin \theta_3^b  \big] \leq 0$ and $\big| \cos \theta_1^a \cos \theta_1^b + \cos \theta_2^a \cos \theta_2^b + \cos \theta_3^a$ $\cos \theta_3^b \big|$  $\geq$ $\big| \cos ( \phi_1^a + \phi_1^b) \sin \theta_1^a \sin \theta_1^b  + \cos ( \phi_2^a + \phi_2^b) \sin \theta_2^a \sin \theta_2^b +  \cos ( \phi_3^a + \phi_3^b)$ $\sin \theta_3^a$ $\sin \theta_3^b  \big|$. 
		
		\textbf{Case III-B:} $\theta_i^a$, $\theta_j^b$, $\phi_i^a$, $\phi_j^b$ ($i, j = 1, 2,3$) are such that $\big[ \cos \theta_1^a \cos \theta_1^b + \cos \theta_2^a \cos \theta_2^b + \cos \theta_3^a \cos \theta_3^b \big]  \geq 0$, $\big[ \cos ( \phi_1^a + \phi_1^b) \sin \theta_1^a \sin \theta_1^b  + \cos ( \phi_2^a + \phi_2^b) \sin \theta_2^a$ $\sin \theta_2^b +  \cos ( \phi_3^a + \phi_3^b) \sin \theta_3^a \sin \theta_3^b  \big] \leq 0$ and $\big| \cos \theta_1^a \cos \theta_1^b + \cos \theta_2^a \cos \theta_2^b + \cos \theta_3^a$ $\cos \theta_3^b \big|$  $\leq$ $\big| \cos ( \phi_1^a + \phi_1^b) \sin \theta_1^a \sin \theta_1^b  + \cos ( \phi_2^a + \phi_2^b) \sin \theta_2^a \sin \theta_2^b +  \cos ( \phi_3^a + \phi_3^b)$ $\sin \theta_3^a$ $\sin \theta_3^b  \big|$. 
		
		\textbf{Case IV-A:} $\theta_i^a$, $\theta_j^b$, $\phi_i^a$, $\phi_j^b$ ($i, j = 1, 2,3$) are such that $\big[ \cos \theta_1^a \cos \theta_1^b + \cos \theta_2^a \cos \theta_2^b + \cos \theta_3^a \cos \theta_3^b \big]  \leq 0$, $\big[ \cos ( \phi_1^a + \phi_1^b) \sin \theta_1^a \sin \theta_1^b  + \cos ( \phi_2^a + \phi_2^b)$ $\sin \theta_2^a \sin \theta_2^b +  \cos ( \phi_3^a + \phi_3^b) \sin \theta_3^a \sin \theta_3^b  \big] \geq 0$ and $\big| \cos \theta_1^a \cos \theta_1^b + \cos \theta_2^a \cos \theta_2^b + \cos \theta_3^a \cos \theta_3^b \big|$  $\geq$ $\big| \cos ( \phi_1^a + \phi_1^b) \sin \theta_1^a \sin \theta_1^b  + \cos ( \phi_2^a + \phi_2^b) \sin \theta_2^a \sin \theta_2^b +  \cos ( \phi_3^a + \phi_3^b) \sin \theta_3^a \sin \theta_3^b  \big|$. 
		
		\textbf{Case IV-B:} $\theta_i^a$, $\theta_j^b$, $\phi_i^a$, $\phi_j^b$ ($i, j = 1, 2,3$) are such that $\big[ \cos \theta_1^a \cos \theta_1^b + \cos \theta_2^a \cos \theta_2^b + \cos \theta_3^a \cos \theta_3^b \big]  \leq 0$, $\big[ \cos ( \phi_1^a + \phi_1^b) \sin \theta_1^a \sin \theta_1^b  + \cos ( \phi_2^a + \phi_2^b)$ $\sin \theta_2^a \sin \theta_2^b +  \cos ( \phi_3^a + \phi_3^b) \sin \theta_3^a \sin \theta_3^b  \big] \geq 0$ and $\big| \cos \theta_1^a \cos \theta_1^b + \cos \theta_2^a \cos \theta_2^b + \cos \theta_3^a \cos \theta_3^b \big|$  $\leq$ $\big| \cos ( \phi_1^a + \phi_1^b) \sin \theta_1^a \sin \theta_1^b  + \cos ( \phi_2^a + \phi_2^b) \sin \theta_2^a \sin \theta_2^b +  \cos ( \phi_3^a + \phi_3^b) \sin \theta_3^a \sin \theta_3^b  \big|$. 
		
		In all these case we follow the same argument presented in the proof of Theorem \ref{th2}. For any given set of spin-$\frac{1}{2}$ observables we obtain the following: either i) maximum quantum violation of $3$-settings linear steering inequality (\ref{linear_3}) is achieved when the shared two-qubit state is maximally entangled, or ii) there does not exist any violation of the $3$-settings linear steering inequality (\ref{linear_3}).
		
		Note that in Eq.(\ref{3s}) we have not assumed that the three spin-$\frac{1}{2}$ observables at Bob's side are in mutually orthogonal directions. Hence, the above result holds even if we assume that $\hat{b}_1 \cdot \hat{b}_2$ = $0$; $\hat{b}_2 \cdot \hat{b}_3$ = $0$; $\hat{b}_1 \cdot \hat{b}_3$ = $0$.

\end{proof}

\section{ Conclusion}\label{secv}
In the present study we have calculated the maximum magnitude (maximum over all possible two-qubit states) of the left hand side of $2$-settings linear steering inequality attainable for any given set of spin-$\frac{1}{2}$ observables on the spatially separated two parties' sides. It has also been shown that a given set of spin-$\frac{1}{2}$ observables violates $2$-settings linear inequality \textit{if and only if} the given set of spin-$\frac{1}{2}$ observables violates Bell-CHSH inequality. Note that it was earlier shown that any given two-qubit state violates $2$-settings linear steering inequality \textit{if and only of} that state violates Bell-CHSH inequality \cite{costa}. Hence, the result presented in this study complements the result obtained in the previous studies \cite{GC2016,costa}.

There are several inequalities which are useful for showing EPR steering \cite{CJWR2009,si1,si2,si3,si4,si6,si7,si5,CFFW2015,GC2016,tighter,entrop1,entrop2}. Since the entangled states that demonstrate EPR steering are proved to be useful resources for various semi-device independent quantum informational tasks, it is important to investigate which state maximally violates a steering inequality for any given set of observables on the spatially separated two parties' sides. In the present study, restricting ourselves to the  linear steering inequalities \cite{CJWR2009} and spin-$\frac{1}{2}$ observables, we have addressed the above issue. In particular, we have shown that the pure maximally entangled two-qubit states give the largest attainable quantum violations of $2$-settings and $3$-settings linear steering inequalities among all possible quantum states for any given set of spin-$\frac{1}{2}$ observables (if there exists any violation for that given set of spin-$\frac{1}{2}$ observables).

It is well known that Bell nonlocality and EPR steering are operationally inequivalent. However, for a given two-qubit state these two notions are equivalent in $2 - 2 - 2$ experimental scenario (involving two parties, two measurement settings per party, two outcomes per setting) in the sense that a given two-qubit state demonstrates steering in this scenario \textit{if and only if} the given two-qubit state shows Bell nonlocality in this scenario  \cite{GC2016,costa}. Motivated by these facts, we have investigated in the present study whether there exists any non-equivalence between Bell nonlocality and EPR steering when the set of spin-$\frac{1}{2}$ observables on the spatially separated two parties' sides is fixed in the above scenario. However, the present results demonstrate the equivalence between Bell nonlocality and EPR steering in the above context.

Addressing the above questions in the context of other steering criterion is worth for future research. We have seen that maximally entangled states give largest violation of linear steering inequality attainable for any given set of spin-$\frac{1}{2}$ observables at trusted and untrusted parties' sides (if there exists any violation for that given set of spin-$\frac{1}{2}$ observables). Moreover, it can also be conjectured that any steering inequality whose left hand side is a linear function of correlations will show the same feature. However, there are lots of linear local realist inequalities which are optimally violated by some pure states other than maximally entangled states \cite{cglmp,tiltedchsh}. Hence, it will be interesting to find out steering inequalities which are maximally violated by some pure states other than maximally entangled states for any given set of spin-$\frac{1}{2}$ observables at trusted and untrusted parties' sides. Another direction for future research will be posing the same question for non-linear steering inequalities and entropic steering inequalities \cite{CFFW2015,GC2016,entrop1,entrop2} proposed so far. Moving beyond spin-$\frac{1}{2}$ observables and two-qubit states it is legitimate to ask which quantum state maximally violates (if there exists a violation) a steering inequality for any set of observables on both sides. Though the results presented in this paper have shown similarities between Bell nonlocality  and EPR steering in the context of state space structure, the above mentioned questions may demonstrate non-equivalence between Bell nonlocality and EPR steering.\\

\section{Acknowledgements}
We would like to gratefully acknowledge fruitful discussions with Prof. Guruprasad Kar and  Some Sankar Bhattacharya. DD acknowledges the financial support from University Grants Commission (UGC), Government of India. SS acknowledges the financial support from INSPIRE programme, Department of Science and Technology, Government of India.

\end{document}